\newcommand{\R}{\ensuremath{\mathbb{R}}}
\newcommand{\N}{\ensuremath{\mathbb{N}}}
\newcommand{\X}{\ensuremath{\mathbb{X}}}
\newcommand{\V}{\ensuremath{\mathbb{V}}}
\newcommand{\Linf}{\ensuremath{\mathcal{L}_\infty}}
\newcommand{\yest}{\ensuremath{\widehat{\kern-1pt\widetilde{y}}\kern1pt}}
\newcommand{\what}{\ensuremath{\widehat{\kern-0.5pt w}}}
\newcommand{\A}{\ensuremath{\mathcal{A}}}
\DeclareMathOperator*{\dom}{dom}
\DeclareMathOperator{\diag}{diag}
\DeclareMathOperator{\interior}{int}
\newcommand\currentcoordinate{\the\tikz@lastxsaved,\the\tikz@lastysaved}
\pgfplotsset{compat=newest}
\pgfplotsset{plot coordinates/math parser=false}
\newtheorem{theorem}{Theorem}
\newtheorem{defn}[theorem]{Definition}
\newtheorem{remark}{Remark}
\newtheorem{assumption}{Assumption}
\newtheorem{corollary}[theorem]{Corollary}
\newtheorem{proposition}[theorem]{Proposition}
\let\oldepsilon\epsilon
\let\epsilon\varepsilon
\let\varepsilon\oldepsilon
\let\leq\leqslant
\let\geq\geqslant
\let\hat\widehat
\let\tilde\widetilde
\let\cal\mathcal
\newif\iffull
\title{\LARGE \bf Event-Triggered State Estimation with Multiple Noisy Sensor Nodes}%
\author{K.J.A. Scheres, M.S.T. Chong, R. Postoyan, W.P.M.H. Heemels
\thanks{Koen Scheres, Michelle Chong and Maurice Heemels are with the Department of Mechanical Engineering, Eindhoven University of Technology, The Netherlands. {\tt\small k.j.a.scheres@tue.nl, m.s.t.chong@tue.nl, m.heemels@tue.nl}}
\thanks{Romain Postoyan is with the Universit\'e de Lorraine, CNRS, CRAN, F-54000 Nancy, France. {\tt\small romain.postoyan@univ-lorraine.fr} His work is supported by the ANR under grant HANDY ANR-18-CE40-0010.}}%
\begin{document}
\maketitle
\thispagestyle{empty}
\pagestyle{empty}
\begin{abstract}
General nonlinear continuous-time systems are considered for which its state is estimated via a packet-based communication network. We assume that the system has multiple sensor nodes, affected by measurement noise, which can transmit at discrete (non-equidistant) points in time. Moreover, each node can transmit asynchronously. For this setup, we develop a state estimation framework, where the transmission instances of the individual sensor nodes can be generated in either time-triggered or event-triggered fashions. In the latter case, we guarantee the absence of Zeno behavior by construction. It is shown that, under the provided design conditions, an input-to-state stability property is obtained for the estimation error with respect to the measurement noise and process disturbances and that the state is thus reconstructed asymptotically in the absence of noise. \iffull A numerical case study shows the strengths of the developed framework.\fi
\end{abstract}
\section{Introduction}
In the last few decades, the quantity and complexity of sensors in Cyber-Physical Systems (CPS) has increased to meet progressing demands on performance, robustness and resilience. Moreover, an increasing number of CPS incorporate (wireless) network technologies to share information across different (physically separated) systems. In particular, the study of Wireless Sensor Networks (WSN) has gained traction in the last two decades. Such WSN have many useful applications, see, e.g., \cite{Jino_Ramson_Moni_2017} and references therein, and are often employed for the state estimation of physically wide spread phenomena.

State estimation has been widely studied and used in the control field, often in the context of stabilizing systems via feedback. Nonetheless, some systems cannot directly actuate (i.e., control) their respective environments or their states only have to be monitored, which motivates considering state estimation as a problem itself. In the context of networked systems, decoupling the control and estimation aspects yields interesting challenges, as we will also see in this paper.

The majority of the existing literature that deals with state estimation over a network often assumes that the underlying system operates in discrete-time, see, e.g., \cite{He_Shin_Xu_Tsourdos_2020,Quevedo_Ahlen_Ostergaard_2010,Sijs_Lazar_2012} and the references therein. Since our physical world is continuous time in nature, it is worthwhile to study the estimation problems using packet-based networks in the continuous-time case as well. Moreover, such a discrete setup is not well-suited to describe phenomena such as asynchronicity between different parts of the system, which is a phenomenon often observed in WSN. Although some results are available for the continuous-time case that deal with multiple sensor modules, see, e.g., \cite{Ren_Al-Saggaf_2018,Wang_Morse_2018}, they often assume that the information can be communicated between different parts of the system continuously. In practice, a packet-based network transmits information at discrete moments. Hence, these works do not apply to the CPS setup.

An additional important aspect, certainly for power-limited devices such as battery-operated sensor modules, is the objective to reduce the frequency of communication required to achieve desirable stability or performance properties. For control purposes, Event-Triggered Control (ETC) has been advocated to this end, see, e.g., \cite{Heemels2012} and the references therein. ETC is a control technique that determines when to exchange information over a (packet-based) network based on well-defined events occurring in the system. It has been shown that these techniques may allow us to greatly reduce the network usage while still guaranteeing similar stability and performance criteria as their time-triggered counterparts.

In the literature on event-triggered state estimation, discrete-time systems are mostly considered, see, e.g., \cite{Shi_Chen_Shi_2014,Muehlebach_Trimpe_2018} and the references therein. Notable exceptions are, e.g., \cite{Zou_Wang_Gao_Liu_2015,Etienne_DiGennaro_Barbot_2017,Niu_Sheng_Gao_Zhou_2020}, where \emph{static} event-triggered state estimation is applied to specific classes of nonlinear systems, and, in the context of \emph{security}, e.g., \cite{Lu_Yang_2017,Liu_Wei_Xie_Yue_2019,Liu_Wang_Yuan_Liu_2019} where linear and Lur'e systems are considered. In all of the above works, measurement noise is ignored. This is not a full surprise, as the presence of measurement noise has been shown to be a challenging issue in ETC \cite{Borgers_Heemels_2014}.

In this context, we study the state estimation problem for perturbed nonlinear systems with multiple sensor nodes, where each node decides when to transmit data, potentially noisy, over the network independently of the other sensors. Our framework covers periodic, time-triggered and dynamic event-triggered packet-based communication behavior. Dynamic event-triggering is used in the sense that each sensor node is equipped with an auxiliary scalar state variable to define the triggering instants. Dynamic triggering strategies are relevant as they typically lead, on average, to larger inter-event times, as shown for ETC in, e.g., \cite{Girard_2015}. When event-triggered communication is used, we prove that the time between two consecutive transmissions is lower bounded by a positive constant, i.e., that no Zeno behavior occurs. Only the knowledge of the upper bound on the size of the noise is needed to potentially reduce the transmission frequency. We guarantee the asymptotic convergence of the state estimates to the true states in the absence of noises. Moreover, when noises are present, we guarantee an input-to-state stability property of the estimation error dynamics, where we show that \emph{space-regularization} (see \cite{Scheres_Postoyan_Heemels_2020}) can be used to obtain more favorable average inter-event times in the presence of non-vanishing noise. \iffull Numerical simulations illustrate the efficacy of our analysis and design framework.\fi

Recently in \cite{Elena2021}, the event-triggered state estimation problem for linear systems and a single sensor node without measurement noise is considered. A key difference in the solution is that we will use local observers in each sensor module, while \cite{Elena2021} uses only first-order dynamics and therefore, requires less computational power in the sensor nodes. However, our method leads to the asymptotic convergence to zero of the estimation error in the absence of disturbances and noises (as opposed to a practical property in \cite{Elena2021}), and applies to a broader range of setups, namely, we consider general nonlinear plant dynamics with multiple noisy sensor nodes.

\iffull\else
For space reasons, the proofs and numerical case study have been omitted from this version. They can be found in the full version of this paper \cite{Scheres2021}.
\fi
\section{Preliminaries}
\iffull\subsection{Notation}\fi
The sets of all non-negative and positive integers are denoted $\N$ and $\N_{>0}$, respectively. The field of all reals and all non-negative reals are indicated by $\R$ and $\R_{\geq0}$, respectively. The identity matrix of size $N\times N$ is denoted by $I_N$, and the vectors in $\R^N$ whose elements are all ones or zeros are denoted by $\mathbf{1}_N$ and $\mathbf{0}_N$, respectively.
For any vector $u\in\R^m$, $v\in\R^n$, the stacked vector $\begin{bmatrix}u^\top&v^\top\end{bmatrix}^\top$ is denoted by $(u,v)$. By $\langle\cdot,\cdot\rangle$ and $|\cdot|$ we denote the usual inner product of real vectors and the Euclidean norm, respectively. We denote a matrix $A$ being positive definite (respectively positive semi-definite) by $A\succ0$ ($A\succeq0$) and negative definite (respectively negative semi-definite) by $A\prec0$ ($A\preceq0$). For any $x\in\R^N$, the distance to a closed non-empty set $\cal{A}$ is denoted by $|x|_\cal{A}:=\min_{y\in\cal{A}}|x-y|$. By $\land$ and $\lor$ we denote the logical \emph{and} and \emph{or} operators respectively.

We use the usual definitions for comparison functions $\mathcal{K}$ and $\mathcal{K}_{\infty}$, see \cite{hybridsystems}.

\iffull\subsection{Hybrid systems}\fi
We model hybrid systems using the formalism of \cite{hybridsystems,Cai2009}. \iffull As such, we consider systems $\cal H(F,\cal C,G,\cal D)$ of the form
\begin{equation*}
    \begin{cases}
        \dot{\xi}\in F(\xi,\nu)&(\xi,\nu)\in\cal{C},\\
        \xi^+\in G(\xi,\nu)&(\xi,\nu)\in\cal{D},
    \end{cases}
\end{equation*}
where $\xi\in\R^{n_\xi}$ denotes the state, $\nu\in\R^{n_\nu}$ a disturbance, $\cal{C}\subseteq\R^{n_\xi}\times\R^{n_\nu}$ the flow set, $\cal{D}\subseteq\R^{n_\xi}\times\R^{n_\nu}$ the jump set, $F:\R^{n_\xi}\times\R^{n_\nu}\rightrightarrows\R^{n_\xi}$ the flow map and $G:\R^{n_\xi}\times\R^{n_\nu}\rightrightarrows\R^{n_\xi}$ the jump map, where the maps $F$ and $G$ are possibly set-valued. Loosely speaking, while $(\xi,\nu)\in\cal{C}$, the state can flow continuously according to $\dot{\xi}\in F(\xi,\nu)$. If $(\xi,\nu)\in\cal{D}$, the state can jump as $\xi^+\in G(\xi,\nu)$. If $(\xi,\nu)\in\cal{C}\cap\cal{D}$, the system can either flow or jump. Flow is only allowed if flowing keeps the solution in $\cal C$.
\fi
See \cite{hybridsystems,Cai2009} for more details on the adopted hybrid terminology, such as maximality of solutions, hybrid time domains, the $\Linf$-norm for hybrid signals, etc.

We are interested in systems $\cal H$ that are persistently flowing as defined below.
\begin{defn}
    Given a set of (hybrid) inputs $\mathcal{V}\subseteq\Linf$, a hybrid system $\cal H$ is persistently flowing if all maximal solution pairs $(\phi,\nu)\in\mathcal{S}_\mathcal{H}$ with $\nu\in\mathcal{V}$ are complete in the $t$-direction, i.e., $\sup_t\dom\phi=\sup_t\dom\nu=\infty$.
\end{defn}

\begin{defn}
    \label{def:isps}When $\cal H$ is persistently flowing with $\cal{V}\subseteq\Linf$, we say that a closed, non-empty set $\A\subset\R^{n_\xi}$ is \emph{input-to-state practically stable} (ISpS), if there exist $\gamma\in\cal{K}$, $\beta\in\cal{KL}$ and $d\in\R_{\geq0}$ such that for any solution pair $(\xi,\nu)$ with $\nu\in\cal{V}$
\begin{equation}
    |\xi(t,j)|_\A\leq\beta(|\xi(0,0)|_\A,t)+\gamma(\|\nu\|_\infty)+d,\label{eq:ispsdef}
\end{equation}
for all $(t,j)\in\dom\xi$. If \eqref{eq:ispsdef} holds with $d=0$, then $\cal{A}$ is said to be \emph{input-to-state stable} (ISS) for $\cal{H}$.
\end{defn}
\iffull
To prove that a given non-empty closed set $\cal{A}$ is IS(p)S, we will use the following Lyapunov conditions.
\begin{proposition} \label{prop:lyapunovisps} Consider a persistently flowing system $\cal{H}$ with a set of hybrid inputs $\cal{V}\subseteq\Linf$ and let $\cal{A}\subset\R^{n_\xi}$ be a non-empty closed set. If there exist a continuously differentiable $V:\R^{n_\xi}\to\R_{\geq0}$, $\alpha,\underline{\alpha},\overline{\alpha}\in\cal{K}_\infty$, $\gamma\in\cal{K}$ and $c\in\R_{\geq0}$ such that
\begin{enumerate}\itemsep-0.25em
    \item for any $(\xi,\nu)\in\cal{C}\cup\cal{D}$,
    \vspace{-0.25em}\[\underline{\alpha}(|\xi|_\A)\leq V(\xi)\leq\overline{\alpha}(|\xi|_\A),\]
    \item for all $(\xi,\nu)\in\cal{C}$ and $f\in F(\xi,\nu)$,
    \vspace{-0.25em}\[\left\langle\nabla V(\xi),f\right\rangle\leq-\alpha(|\xi|_\A)+\gamma(|\nu|)+c,\]
    \item for all $(\xi,\nu)\in\cal{D}$ and any $g\in G(\xi,\nu)$,
    \vspace{-0.4em}\[V(g)-V(\xi)\leq0,\]
\end{enumerate}
then $\cal{A}$ is ISpS, and it is ISS if $c=0$.
\end{proposition}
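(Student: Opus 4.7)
Fix a maximal solution pair $(\xi,\nu)$ with $\nu\in\cal{V}$, which by persistent flowing satisfies $\sup_t\dom\xi=\infty$. Define $W(t,j):=V(\xi(t,j))$ on $\dom\xi$. Condition~3 immediately gives the jump inequality $W(t,j{+}1)\leq W(t,j)$ at every jump time, so $W$ never grows across a jump.

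Next I would combine conditions~1 and~2 along flow. Setting $\rho:=\alpha\circ\underline{\alpha}^{-1}\in\cal{K}_\infty$, condition~1 implies $|\xi|_\A\geq\underline{\alpha}^{-1}(V(\xi))$, so condition~2 yields
\[
\dot W\leq -\rho(W)+\gamma(\|\nu\|_\infty)+c
\]
on every flow interval. A standard ISS-style splitting then applies: with $\mu:=\rho^{-1}\bigl(2(\gamma(\|\nu\|_\infty)+c)\bigr)$, either $W\leq\mu$ (so $W$ already sits in the residual set) or $\dot W\leq -\tfrac12\rho(W)$. A comparison lemma (cf.\ \cite{hybridsystems}) produces $\tilde\beta\in\cal{KL}$ such that on any flow interval $[s,t]\times\{k\}\subseteq\dom\xi$ one has $W(t,k)\leq\tilde\beta(W(s,k),t-s)+\mu$.

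Since $W$ is non-increasing at jumps, these flow bounds concatenate: a straightforward induction over the jump index yields the global estimate $W(t,j)\leq\tilde\beta(W(0,0),t)+\mu$ for every $(t,j)\in\dom\xi$. Translating back via condition~1, using $W(0,0)\leq\overline{\alpha}(|\xi(0,0)|_\A)$ together with the elementary inequality $\eta^{-1}(a+b)\leq\eta^{-1}(2a)+\eta^{-1}(2b)$ valid for any $\eta\in\cal{K}_\infty$, delivers \eqref{eq:ispsdef} with $\beta(r,t):=\underline{\alpha}^{-1}(2\tilde\beta(\overline{\alpha}(r),t))\in\cal{KL}$, a $\cal{K}$-class gain absorbing the $\|\nu\|_\infty$-dependence of $\mu$, and $d:=\underline{\alpha}^{-1}(2\rho^{-1}(4c))$. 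This establishes ISpS; when $c=0$ we have $\mu=\rho^{-1}(2\gamma(\|\nu\|_\infty))$ alone, so $d=0$ and we obtain ISS.

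The main obstacle is the concatenation step: one must argue that the flow-interval $\cal{KL}$-bound propagates through jumps so that the resulting global bound depends only on the continuous coordinate $t$, as demanded by Definition~\ref{def:isps}, rather than on the hybrid distance $t+j$. The jump-monotonicity furnished by condition~3 is precisely what allows this without any a priori control on the number of jumps in finite $t$-intervals; beyond that, the derivation is a textbook application of Lyapunov-based ISpS reasoning for hybrid systems in the sense of \cite{hybridsystems,Cai2009}.
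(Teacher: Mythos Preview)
Your approach is essentially the same as the paper's: decompose the hybrid time domain into flow intervals, use condition~2 on each interval together with the sandwich bounds to get a differential inequality for $V$ along solutions, invoke a Sontag-type comparison argument (the paper cites \cite[Lemma~2.14]{Sontag1995}), and patch across jumps using the non-increase from condition~3 together with persistent flowing. The paper's own argument is only a sketch, so your more explicit write-up is fine in spirit.

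There is, however, one genuine slip. You write ``condition~1 implies $|\xi|_\A\geq\underline{\alpha}^{-1}(V(\xi))$'' and set $\rho:=\alpha\circ\underline{\alpha}^{-1}$. This is the wrong direction: from $\underline{\alpha}(|\xi|_\A)\leq V(\xi)$ you only get $|\xi|_\A\leq\underline{\alpha}^{-1}(V(\xi))$, which is useless for lower-bounding $\alpha(|\xi|_\A)$. What you need is the \emph{upper} sandwich bound $V(\xi)\leq\overline{\alpha}(|\xi|_\A)$, which yields $|\xi|_\A\geq\overline{\alpha}^{-1}(V(\xi))$ and hence $-\alpha(|\xi|_\A)\leq-\alpha\bigl(\overline{\alpha}^{-1}(V(\xi))\bigr)$. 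So the correct choice is $\rho:=\alpha\circ\overline{\alpha}^{-1}$. With that fix the rest of your argument goes through unchanged; the final $\beta$, the $\cal K$-gain, and $d$ are built exactly as you indicate, just with $\overline{\alpha}$ in place of $\underline{\alpha}$ inside $\rho$.
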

\begin{proof}[Sketch of proof]
Let $(\xi,w)$ be a solution to $\cal H$, $(t,j)\in\dom \xi$ and $0=t_{0}\leq t_{1} \leq \ldots \leq t_{j+1}=t$ satisfy
\begin{equation}
\dom \xi \cap ([0,t]\times\{0,\ldots,j\}) = {\underset{i\in\{0,\ldots,j\}}\bigcup}[t_{i},t_{i+1}]\times\{i\}.
\end{equation}
For each $i\in\{1,2,\ldots,j\}$ and for almost all $s\in[t_{i},t_{i+1}]$, item 2) of Proposition 2 implies that $\left\langle\nabla V(\xi(s,i)),\dot\xi(s,i)\right\rangle\leq-\alpha(|\xi(s,i)|_\cal{A})+\gamma(|w(s)|)+c$.
We can then invoke similar arguments as in \cite[Lemma 2.14]{Sontag1995} to obtain the desired result as: (i) $V$ does not increase at jumps according to item 3) of Proposition 2, (ii) item 1) holds, and (iii) $\cal{H}$ is persistently flowing.
\end{proof}
\fi

\section{Problem formulation}\label{sec:problemformulation}
\begin{figure}[b]
    \tikzstyle{block} = [draw, fill=gray!30, rectangle, minimum height=1cm, minimum width=1cm]
    \tikzstyle{etm} = [draw, fill=white, rectangle, minimum height=0.75cm, minimum width=0.75cm]
    \tikzstyle{network} = [draw=none, rectangle, minimum height=0.75cm, minimum width=3.75cm]
    \tikzstyle{dots} = [draw=none,text centered,text height=2.1ex]
    \centering
    \begin{tikzpicture}[node distance=1.5cm, >=latex, line width=1pt, color=black,Pattern/.style={pattern=north east hatch, pattern color=blue!20, hatch distance=15pt, hatch thickness=6pt}]
        \node[block] at (0,0) (system) {$\Sigma$};
        \node[draw=none, left of=system, node distance=1.25cm] (v) {$v$};
        \node[dots,right of=system,node distance=1.75cm] (dots) {$\vdots$};
        \node[draw, circle, minimum size=2mm, inner sep=0pt, above of=system, node distance=1.25cm] (sum) {$+$};
        \node[draw=none, left of=sum, node distance=1.125cm] (w) {$w_1$};
        \node[draw, circle, minimum size=2mm, inner sep=0pt, below of=system, node distance=1.25cm] (sumN) {$+$};
        \node[draw=none, left of=sumN, node distance=1.125cm] (wN) {$w_N$};
        \node[etm, right of=sum,node distance=1.75cm,text width=1.25cm,align=center, label={90:{\footnotesize Sensor 1}}] (TX) {TX/RX\\$\Omega_1$};
        \node[etm, right of=sumN,node distance=1.75cm,text width=1.25cm,align=center, label={90:{\footnotesize Sensor N}}] (TXN) {TX/RX\\$\Omega_N$};
        \node[network,right of=system, rotate=90, node distance=3.75cm] (net) {Network};

        \draw[dashed](net.north west)--(net.north east) (net.south west)--(net.south east);
        \draw[->] (v)--(system);
        \draw[->] (system)--(sum) node[midway, left] {$y_1$};
        \draw[->] (w)--(sum);
        \draw[->] (sum)--(TX) node[midway, above] {$\widetilde{y}_1$};
        \draw[->] (system)--(sumN) node[midway, left] {$y_N$};
        \draw[->] (wN)--(sumN);
        \draw[->] (sumN)--(TXN) node[midway, below] {$\widetilde{y}_N$};
        \draw[->, dashdotted] (TX.15)--(net.north|-TX.15) node[midway, above] {$\yest_1$};
        \draw[<-, dashdotted] (TX.345)--(net.north|-TX.345) node[midway, below] {$\yest$};
        \draw[->, dashdotted] (TXN.15)--(net.north|-TXN.15) node[midway, above] {$\yest_N$};
        \draw[<-, dashdotted] (TXN.345)--(net.north|-TXN.345) node[midway, below] {$\yest$};

        \node[etm, right of=net, node distance=1.5cm] (RX) {RX};
        \node[block, right of=RX, node distance=1.5cm, minimum height=1cm, fill=blue!20] (observer) {$\Omega_0$};
        \draw[->, dashdotted] (net.south|-RX)--(RX) node[midway, above] {$\yest$};
        \draw[->] (RX)--(observer);
        \matrix [draw,below left, color=black!80] at (current bounding box.north east) {
            \draw[->] (\currentcoordinate)--++(0.75,0) node[near end, right] {\footnotesize\quad continuous};\\
            \draw[->, dashdotted] (\currentcoordinate)--++(0.75,0) node[near end, right] {\footnotesize\quad discrete};\\
        };
    \end{tikzpicture}
    \caption{\footnotesize Event-triggered state estimation setup over a packet-based network. $\Sigma$ is the to-be-observed system, $\Omega_0$ the remote observer. In case the transmissions are event-based, each sensor runs a local observer $\Omega_i$, $i\in\mathcal{N}$, and the sensor measurements are broadcast over the network. Dashed arrows indicate packet-based (discrete) communication. Solid arrows denote continuous communication. Due to Assumption \ref{ass:init}, the indices for $\yest$ have been omitted.}
    \label{fig:setup}
\end{figure}
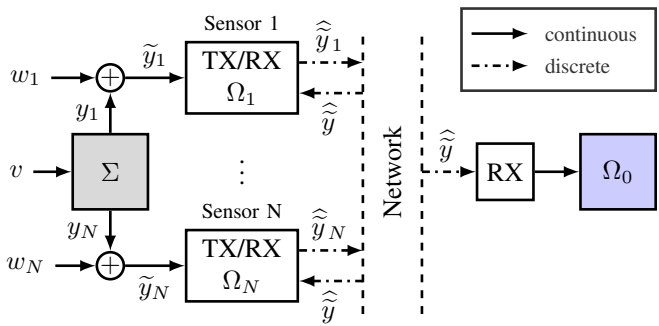
We address the event-triggered state estimation problem of a dynamical system with multiple asynchronous (TX/RX) sensor nodes, where the measurements are transmitted over a packet-based communication network, as shown in Fig. \ref{fig:setup}. As we will see, due to our particular setup and design, time-triggered and periodic communication will be special cases of the event-triggered setting, which will be treated in detail.

Next, we explain the role of each component in the setup as illustrated in Fig. \ref{fig:setup} below. First of all, the system $\Sigma$ has dynamics
\begin{equation}\label{eq:systemdynamics}
    \Sigma:\begin{cases}\dot{x}=f_p(x,v)\\y_i=h_{p,i}(x),\;\widetilde{y}_i=y_i+w_i,&i\in\mathcal{N}\end{cases}
\end{equation}
with state $x\in\R^n$, $N\in\N_{>0}$ sensor nodes measuring $\widetilde{y}_i\in\R^{m_i}$, $i\in\mathcal{N}:=\{1,2,\ldots,N\}$, system disturbance $v$ taking values in $\R^p$, a map $f_p:\R^n\times\R^p\to\R^n$, continuously differentiable output maps $h_{p,i}:\R^n\to\R^{m_i}$ and measurement noise $w_i$ taking values in $\R^{m_i}$, $i\in\mathcal{N}$. We assume that the disturbance signal $v$ and measurement noises $w_i$, $i\in\mathcal{N}$, are in $\Linf$. We also assume that \eqref{eq:systemdynamics} is forward complete, i.e., for all $x(0)\in\R^n$ and all $v\in\Linf$, there exists a unique solution $x$ to \eqref{eq:systemdynamics} for all $t\in\R_{\geq0}$.

The objective of the to-be-designed observer $\Omega_0$ in Fig. \ref{fig:setup} is to asymptotically reconstruct the state $x$ of system $\Sigma$. Due to the presence of the packet-based network, the observer $\Omega_0$ does not have (continuous) access to the outputs $\widetilde{y}_i$, but only to the estimates gathered in $\yest^0:=(\kern1pt\yest_1^0,\yest_2^0,\ldots,\yest_N^0)$. Each sensor node $i$ has a local triggering condition, which determines when $\widetilde{y}_i$ needs to be communicated over the network. These transmissions occur at times denoted by $t^i_k$, $k\in\N$, at which the estimate of the $i$-th output in $\Omega_0$ is updated according to
\begin{equation}
    \yest_i^0((t^i_k)^+)=\widetilde{y}_i(t^i_k),\quad k\in\N.\label{eq:update}
\end{equation}
We assume that there are no transmission delays. In the future, we plan on taking transmission delays into account, possibly inspired by \cite{Heemels_Teel_Wouw_Nesic_2010,Dolk2017}.

For each sensor node $i$, the transmission times $t_k^i$, $k\in\N$, are determined by the corresponding $i$-th trigger mechanism. In our description, we will focus on event-triggered mechanisms, but, as stated before, the framework that we present includes (non-uniform) time-based sampling as well as a special case. In the Event-Triggered Mechanism (ETM), the locally available output information is used to determine the transmission instances $t_k^i$, $k\in\N$. For such ETMs, it is essential to guarantee that the time between consecutive transmissions of trigger mechanism $i$ is \emph{lower bounded} by a positive minimum inter-event time (MIET) $\tau_\text{MIET}^i\in\R_{>0}$, i.e., $0<\tau_\text{MIET}^i\leq t_{k+1}^i-t_k^i$, for any $k\in\N$.

The objective of this paper can now be formalized as the joint design of a) the remote observer $\Omega_0$, that only uses the information $\widetilde{y}_i$ received at times $t_k^i$, $k\in\N$, $i\in\mathcal{N}$, and b) the event- or time-triggered transmission mechanisms in the sensor nodes that generate the times $t_k^i$, $k\in\N$, such that
\begin{enumerate}[(i)]
    \item the observer $\Omega_0$ reconstructs the state (approximately) in the presence of process disturbances $v$ and measurement noise $w_i$,
    \item a positive lower bound on the inter-event times (MIET) is guaranteed for each node and communication resources are saved by transmitting sporadically.
\end{enumerate}
\section{Proposed solution structure}\label{sec:solutionstructure}
We proceed by emulation, i.e., we assume that the observer is designed in continuous-time, ignoring the network, to ensure robust global asymptotic stability of the origin for the corresponding estimation error system, as formalized in the sequel. Any existing observer designs suitable for plant \eqref{eq:systemdynamics} can be used for this task. The remote observer $\Omega_0$ receives sensor measurements $\widetilde{y}:=(\widetilde{y}_1,\widetilde{y}_2,\ldots,\widetilde{y}_N)$ over the network and generates the state estimate $\chi_0$. The observer dynamics is given by
\begin{equation}\label{eq:observerdynamics}
    \Omega_0:\dot{z}_0=f_o(z_0,\yest^0),\;\chi_0=h_o(z_0),
\end{equation}
with the observer state $z_0\in\R^{q}$, $q\in\N_{>0}$, the state estimate of the system $\chi_0\in\R^n$, $f_o:\R^q\times\R^m\to\R^q$ with $m:=\sum_{i\in\mathcal{N}}m_i$ and $h_o:\R^q\to\R^n$, and where $\yest^0:=(\kern1pt\yest_1^0,\yest_2^0,\ldots,\yest_N^0)$ is the estimate of $\widetilde{y}:=(\widetilde{y}_1,\widetilde{y}_2,\ldots,\widetilde{y}_N)$. Since $z_0$ admits a different number of states than the system $\Sigma$, Kalman-like observers and immersion-based observers, see, e.g., \cite{Ticlea_Besancon_2007}, may be considered.

In between the transmission times $t_k^i$, $k\in\N$, the estimate $\yest^0_i$, $i\in\mathcal{N}$, evolves according to the \emph{holding function} $f_{h,i}:\R^q\times\R^m\to\R^{m_i}$, which is part of the design, given by the differential equation
\begin{equation}
    \dot{\kern5pt\yest^0_i}=f_{h,i}(z_0,\yest^0).\label{eq:hf}
\end{equation}
Note that only locally available information is used in the holding function and the expression of $f_{h,i}$ is specified in the following.

Each $i$-th sensor has a \emph{trigger} that determines when the value of $\widetilde{y}_i$ is transmitted over the network. All $i\in\mathcal{N}$ sensors are equipped with a transmission and receiver module (TX and RX) and a local observer $\Omega_i$. At times $t_k^i$, generated by trigger $i$, the value of $\widetilde{y}_i$ is broadcast over the network and consequently received by all $j$ sensor modules, $j\in\mathcal{N}$, and the remote observer $\Omega_0$. The local observer $\Omega_i$, $i\in\mathcal{N}$, is chosen to be a copy of the remote observer $\Omega_0$, i.e., its dynamics are taken as $\dot{z}_i=f_o(z_i,\yest^i)$ and $\chi_i=h_o(z_i)$ and the holding functions as $\frac{d}{dt}\yest^i_j=f_{h,j}(z_i,\yest^i)$ for all $j\in\mathcal{N}$.

In the event-triggered setting, the (local) observers play a role in generating the transmission events, as we will make more precise below. In the time-triggered or periodic case, these local observers are omitted and replaced by a local clock to measure the time elapsed since the last transmission, in which case the receiver module can be omitted from all sensors.

We adopt the following assumption.
\begin{assumption}\label{ass:init}
    At the initial time $t=0$, $z_i(0)=z_j(0)$ and $\yest^i(0)=\yest^j(0)$ for all $i,j\in\mathcal{N}\cup\{0\}$.
\end{assumption}
This assumption essentially entails that at time $t=0$, the states $z_i$ and estimates of the outputs $\yest^i$ are the same across all the sensors (local observers $\Omega_i$, $i\in\mathcal{N}$) and the remote observer $\Omega_0$. In practice, this can be achieved by using a boot-up sequence or initialization procedure. Since we do not restrict ourselves to any particular initial condition, a straightforward choice would be to set $z_i=\mathbf{0}_q$ and $\yest^i=\mathbf{0}_m$ for all $i\in\mathcal{N}\cup\{0\}$ at $t=0$. In future work, we aim to relax Assumption \ref{ass:init}.

Due to Assumption \ref{ass:init} and \eqref{eq:update} and \eqref{eq:hf}, we have that $z_i(t)=z_j(t)$ and $\yest^i(t)=\yest^j(t)$ for all $t\in\R_{\geq0}$ and any $i,j\in\mathcal{N}\cup\{0\}$. Henceforth we will drop the indices for $z_i$ and $\yest^i$ and we will denote the observer state and output estimate as $z$ and $\yest$, respectively.

The holding function $f_{h,i}$ in \eqref{eq:hf} is designed as
\begin{equation}\label{eq:hfdesign}
    f_{h,i}(z,\yest)=\frac{\partial h_{p,i}(h_o(z))}{\partial \chi}f_p(h_o(z),0).
\end{equation}
The choice for the particular structure of this holding function is explained in Remark \ref{rem:holdingfunction} below.

In the event-triggered setting, we allow the trigger mechanism to depend on a local auxiliary variable, denoted by $\eta_i\in\R_{\geq0}$, $i\in\cal N$, as in dynamic triggering of e.g. \cite{Girard_2015,Dolk2017}. The local variable $\eta_i$ has flow dynamics $\dot{\eta}_i=\Psi_i(o_i)$ and jump dynamics $\eta_i^+=\eta_i^0(o_i)$, where $o_i\in\R^{o_i}$ denotes the locally available information and the functions $\Psi_i:\R^{o_i}\to\R$ and $\eta_i^0:\R^{o_i}\to\R$ will be designed appropriately.
Then, the triggering instances $t_k^i$, $i\in\mathcal{N}$, are determined by, for all $k\in\N$,
\begin{equation}\label{eq:triggertimes}
    t_0^i=0,\;t_{k+1}^i:=\inf\{t\geq t_k^i+\tau_\text{MIET}^i\mid\eta_i(t)\leq0\}.
\end{equation}
Additionally, each sensor $i\in\mathcal{N}$ has a local timer to keep track of the time elapsed since the last transmission of sensor $i$, denoted $\tau_i\in\R_{\geq0}$, whose dynamics are given by, for all $i\in\mathcal{N}$,
\begin{equation}\label{eq:timers}
    \dot\tau_i=1,\; \tau_i^+=0,
\end{equation}
where resets take place when node $i$ transmits $\widetilde{y}_i$. Note that the local clocks do not have to be synchronized. In solving the problem formulated at the end of Section \ref{sec:problemformulation}, we have to appropriately select $\Psi_i$, $\eta_i^0$ and $\tau_\text{MIET}^i$, $i\in\mathcal{N}$, which will be done in the following sections after providing a suitable hybrid model.
\section{Hybrid Model}\label{sec:hybridmodel}
To proceed with the analysis, we model the overall system as a hybrid system using the formalism of \cite{hybridsystems,Cai2009} for which a jump corresponds to a transmission of one of the outputs $\widetilde{y}_i$ over the network.

We define for this purpose, $\widehat{y}_i$ and $\what_i$, where
\begin{equation}
    \begin{aligned}
        \widehat{y}_i((t^i_k)^+)&=y_i(t^i_k),&\dot{\kern-1.2pt\widehat{y}}_i&=f_{h,i}(z,\yest),\\
        \what_i((t^i_k)^+)&=w_i(t^i_k),&\dot{\what}_i&=0.
    \end{aligned}\label{eq:holdingfunction}
\end{equation}
Here, $\what_i$ is the sampled measurement noise and $\widehat{y}_i$ the estimated noise-free output. From \eqref{eq:update} and \eqref{eq:hf}, we deduce that
\begin{equation}
    \yest_i=\widehat{y}_i+\what_i\label{eq:yestsplit}.
\end{equation}
Let the \emph{noise-free} network-induced error of the $i$-th sensor, $i\in\mathcal{N}$, be denoted by
\begin{equation}
    \epsilon_i:=\widehat{y}_i-y_i.\label{eq:networkerror}
\end{equation}
The \emph{measured} network-induced error (influenced by the measurement noise), is denoted by
\begin{equation}
    \widetilde{\epsilon}_i:=\yest\kern-1pt_i-\widetilde{y}_i=\epsilon_i+\what_i-w_i.\label{eq:measnetworkerror}
\end{equation}
The estimation error is denoted $e:=\chi-x$, whose dynamics between two successive transmission instances is given by
\begin{equation}\label{eq:dote}
    \dot{e}=\dot{\chi}-\dot{x}=\frac{\partial h_o(z)}{\partial z}f_o(z,\yest)-f_p(x,v).
\end{equation}

The $i$-th noise-free network-induced error $\epsilon_i$, $i\in\mathcal{N}$, has dynamics
\begin{equation}\label{eq:dotepsilon}
    \dot{\epsilon}_i=\dot{\kern-1pt\widehat{y}}_i-\dot{y}_i=f_{h,i}(z,\yest)-\frac{\partial h_{p,i}(x)}{\partial x}f_p(x,v).
\end{equation}

\begin{remark}\label{rem:holdingfunction}
    An appropriate choice for the holding function is important to reduce communication. By picking $f_{h,i}(z,\yest)$ as in \eqref{eq:hfdesign}, we have for $\chi=h_o(z)=x$, and thus $e=0$, and in the absence of disturbances, that $\dot{\epsilon}_i=0$, i.e., there is no ``need'' to communicate over the network. In contrast, if we employ, e.g., a zero-order hold, i.e., $f_{h,i}(z)=0$ in \eqref{eq:holdingfunction}, this is not the case \emph{even in the absence of disturbances}, since $f_{p}(x,0)$ may be non-zero, i.e., the network-induced error may grow over time even though, initially, the observer has a correct estimate of the state. Since the holding function depends on the observer state, a local `copy' of the observer is needed at every sensor for this implementation.
\end{remark}

The state for the hybrid model $\mathcal{H}$ is $\xi:=(z,e,\epsilon,\what,\tau,\eta)\in\X$, where $\tau:=(\tau_1,\tau_2,\ldots,\tau_N)$, $\eta:=(\eta_1,\eta_2,\ldots,\eta_N)$ and $\X:=\R^q\times\R^n\times\R^m\times\R^m\times\R^N_{\geq0}\times\R^N_{\geq0}$. Additionally, we concatenate the disturbances and the noises as $\nu:=(v,w)\in\V$ with $\V:=\R^n\times\R^m$.

Using \eqref{eq:yestsplit} and \eqref{eq:measnetworkerror}, we can rewrite $\widehat{\widetilde{y}}$ as
\begin{equation}\label{eq:yest}
    \yest=y+\epsilon+\what
\end{equation}
with $y:=(y_1,y_2,\ldots,y_N)$, $\epsilon:=(\epsilon_1,\epsilon_2,\ldots,\epsilon_N)$ and $\what:=(\what_1,\what_2,\ldots,\what_N)$. Based on \eqref{eq:holdingfunction}, \eqref{eq:dote} and \eqref{eq:dotepsilon}, the flow map $F:\X\times\V\to\X$ is given by
\begin{equation}\label{eq:ttflowmap}
    \begin{gathered}
        F(\xi,\nu):=\big(f_z(z,e,\epsilon,\what),f_e(z,e,\epsilon,\what,v),\hspace{3em}\\
            \hspace{7em} g(z,e,\epsilon,\what,v), \mathbf{0}_N, \mathbf{1}_N, \Psi(o)\big).
    \end{gathered}
\end{equation}
where $\Psi(o):=(\Psi_1(o_1),\Psi_2(o_2),\ldots,\Psi_N(o_N))$ is the trigger dynamics to-be-designed in Section \ref{sec:design}. From \eqref{eq:observerdynamics} and \eqref{eq:yest} we obtain
\begin{equation}
    f_z(z,e,\epsilon,\what):=f_o(z,{h_p}(h_o(z)-e)+\epsilon+\what),
\end{equation}
where $h_p(x):=\left(h_{p,1}(x),h_{p,2}(x),\ldots,h_{p,N}(x)\right)$. From \eqref{eq:dote} we obtain
\begin{equation}
    \begin{gathered}
        f_e(z,e,\epsilon,\what,v):=\frac{\partial h_o(z)}{\partial z}f_z(z,e,\epsilon,\what)-f_p(h_o(z)-e,v).
    \end{gathered}
\end{equation}
From \eqref{eq:dotepsilon} and \eqref{eq:hfdesign} we obtain
\begin{equation}
    g(z,e,\epsilon,\what,v):=\frac{\partial h_{p}(x)}{\partial x}\left(f_p(h_o(z),0)-f_p(x,v)\right).
\end{equation}
The flow set $\mathcal{C}\subseteq\X$ is given by $\mathcal{C}:=\bigcap_{i\in\mathcal{N}}\mathcal{C}_i$, with
\begin{equation}\label{eq:flowseti}
    \mathcal{C}_i:=\{\xi\in\X\mid\eta_i\geq0\}.
\end{equation}
The jump set $\mathcal{D}\subseteq\X$ is given by $\mathcal{D}:=\bigcup_{i\in\mathcal{N}}\mathcal{D}_i$, with
\begin{equation}
    \mathcal{D}_i:=\left\{\xi\in\X\mid\tau_i\geq\underline{\tau}_i\right\},\label{eq:ttjumpset}
\end{equation}
where $\underline{\tau}_i\in(0,\tau_\text{MIET}^i]$, $i\in\mathcal{N}$, are positive (but arbitrarily small) constants and $\tau_\text{MIET}^i>0$ is to-be-designed. Since we will ensure that $\eta_i\geq0$ for $\tau_i\in[0,\tau_\text{MIET}^i]$ by appropriate design of the trigger dynamics $\dot{\eta}_i=\Psi_i(o_i)$, taking the sets \eqref{eq:flowseti} and \eqref{eq:ttjumpset} admits both the time-triggered and event-triggered communication cases as particular solutions, and, hence, stability and persistently flowing properties can be proven for both settings through one model. Nonetheless, the existence and completeness of solutions has to be treated separately for both cases. The sets $\mathcal{C}_i$ and $\mathcal{D}_i$ can be redesigned appropriately to allow only one or the other, e.g.,
\begin{itemize}\setlength\itemsep{0pt}
    \item time-triggered communication:
    \begin{equation}\label{eq:ttc}
        \begin{cases}
           \mathcal{C}_i:=\{\xi\in\X\mid\tau_i\in[0,\tau_\text{MIET}^i]\}\\
           \mathcal{D}_i:=\{\xi\in\X\mid\tau_i\geq\underline{\tau}_i\},
        \end{cases}
    \end{equation}
    \item event-triggered communication:
    \begin{equation}\label{eq:etc}
        \begin{cases}
           \mathcal{C}_i:=\{\xi\in\X\mid\eta_i\geq0\}\\
           \mathcal{D}_i:=\{\xi\in\X\mid\eta_i\leq0\land\tau_i\geq\tau_\text{MIET}^i\}
        \end{cases}
    \end{equation}
\end{itemize}

Based on \eqref{eq:holdingfunction} and \eqref{eq:networkerror} the jump map $G:\X\times\V\to\X$ is given by $G(\xi,\nu):=\bigcup_{i\in\mathcal{N}}G_i(\xi,\nu)$, where
\begin{align}
    &G_i(\xi,\nu):=\begin{cases}
        \left\{\overline{G}_i(\xi,\nu)\right\},&\text{if }\xi\in\mathcal{D}_i,\\
        \emptyset,& \text{if }\xi\not\in\mathcal{D}_i,
    \end{cases}\\
    &\overline{G}_i(\xi,\nu):=(z,e,\overline{R}_i\epsilon,\overline{R}_i\what+R_iw,\overline{S}_i\tau,\overline{S}_i\eta+S_i\eta^0(o)),\nonumber
\end{align}
and $\eta^0(o):=(\eta_1^0(o_1),\eta_2^0(o_2),\ldots,\eta_N^0(o_N))$ is the (to be designed) jump map of $\eta$, $R_i:=\diag(\Delta_{i,1},\Delta_{i,2},\ldots,\allowbreak\Delta_{i,N})$ is a block-diagonal matrix with $\Delta_{i,j}=0_{m_j}$ if $i\neq j$ and $\Delta_{i,j}=I_{m_j}$ if $i=j$ where $0_k$ is the $k\times k$ zero matrix, $\overline{R}_i=I-R_i$, $S_i$ is a diagonal matrix with the $ii$-th element 1 and all other elements 0 and $\overline{S}_i=I_N-S_i$. Note that jumps are allowed for all $\tau_i\geq\underline{\tau}_i$. Hence, this model captures solutions pertaining to periodic, time-triggered and event-triggered communication behavior.

The first item of the problem formulation at the end of Section \ref{sec:problemformulation} can now be formally stated as designing the constants $\tau_\text{MIET}^i>0$ and functions $\Psi_i(o_i)$ and $\eta^0_i(o_i)$ such that the set $\cal A:=\{\xi\in\X\mid e=0\land\epsilon=0\land\eta=0\}$ is IS(p)S in the sense of Definition \ref{def:isps}.
\section{Design and analysis}\label{sec:design}
We denote the hybrid model as defined in Section \ref{sec:hybridmodel} by $\mathcal{H}$. Inspired by \cite{Dolk2017,Carnevale_Teel_Nesic_2007}, in this section, conditions are presented such that the ETM given by \eqref{eq:ttc} or \eqref{eq:etc} ensures IS(p)S for the estimation error system \eqref{eq:dote}. The following assumption ensures that the observer dynamics does not have finite escape times.
\begin{assumption}\label{ass:zboundedness}
    The observer \eqref{eq:observerdynamics} is forward complete such that for all $z(0)\in\R^q$ and all measurable signals $\yest$, there exists a unique solution $z$ to \eqref{eq:observerdynamics} for all $t\in\R_{\geq0}$.\footnote{In the case of unstable plants \eqref{eq:systemdynamics}, we may obtain the case where $\yest\notin\Linf$. However, this is not an issue as $\hat{\tilde{y}}$ is typically used in the observer $\Omega_{i}$ via output injection, e.g. $h_p(\chi)-\yest$.}
\end{assumption}
Additionally, we require that the dynamics of the network-induced error system satisfies the next property.
\begin{assumption}\label{ass:W}
    For all $i\in\mathcal{N}$, there exist locally Lipschitz functions $W_i:\R^{m_i}\to\R_{\geq0}$ and constants $\underline{\beta}_{i}, \overline{\beta}_{i} > 0$ such that, for all $\epsilon_i\in\R^{m_i}$,
    \begin{equation}\label{eq:Wsandwich}
        \underline{\beta}_i|\epsilon_i|\leq W_i(\epsilon_i)\leq \overline{\beta}_i|\epsilon_i|.
    \end{equation}
    Additionally, there exist constants $L_i\geq0$ and a continuous function $H_i:\R^q\times\R^n\times\R^m\times\R^m\times\R^p\to\R_{\geq0}$ such that for almost all $\epsilon_i$, and all $v\in\R^p$, $z\in\R^q$, $e\in\R^n$ and $\what\in\R^m$,
    \begin{equation}\label{eq:Wder}
        \left\langle\frac{\partial W_i(\epsilon_i)}{\partial \epsilon_i},g(z,e,\epsilon,\what,v)\right\rangle\leq L_iW_i(\epsilon_i)+H_i(z,e,\what,v).
    \end{equation}
\end{assumption}
The inequality in \eqref{eq:Wder} is loosely speaking an upper bound on the growth of $\epsilon_i$ between successive transmission instants, i.e., that $\epsilon_i$ grows at most exponentially. This condition is always satisfied, e.g., linear systems or when the gradient of $W_i$ is bounded (almost everywhere) by a given constant and $g(z,e,\epsilon,\what,v)$ satisfies a linear growth condition. In the following, we will omit the arguments of $H_i$, as specified in Assumption \ref{ass:W}, for brevity.
\begin{assumption}\label{ass:V}
    There exist a locally Lipschitz function $V:\R^n\to\R_{\geq0}$, $\mathcal{K}_\infty$-functions $\alpha,\underline{\alpha},\overline{\alpha},\zeta$ and, for all $i\in\mathcal{N}$, positive semi-definite functions $\varrho_i:\R^{m_i}\to\R_{\geq0}$ and constants $\mu_i,\gamma_i>0$ such that for all $e\in\R^n$
    \begin{equation}\label{eq:Vsandwich}
        \underline{\alpha}(|e|)\leq V(e) \leq \overline{\alpha}(|e|),
    \end{equation}
    and for almost all $e\in\R^n$ and all $z\in\R^q$, $\nu\in\V$ and $\epsilon,\what\in\R^m$
    \begin{equation}\label{eq:Vder}
        \begin{aligned}
            &\left\langle \nabla V(e),f_e(z,e,\epsilon,\what,v)\right\rangle\leq-\alpha(|e|)+\zeta(|\nu|)\\
            &+\sum_{i\in\mathcal{N}}\left(-\varrho_i(|q_i(z,e,w)|)-H_i^2+(\gamma_i^2-\mu_i)W_i^2(\epsilon_i)\right),
        \end{aligned}
    \end{equation}
    where $q_i(z,e,w):=h_{p,i}(\chi)-\widetilde{y}_i=h_{p,i}(h_o(z))-h_{p,i}(h_o(z)-e)-w_i$ is the output estimation error.
\end{assumption}
Assumption \ref{ass:V} is an $\mathcal{L}_2$-gain condition from $W_i$ to $H_i$. \iffull As we will show in Section \ref{sec:numericalexample}, in the case of observable linear systems, the conditions of Assumption \ref{ass:V} can be met by solving a linear matrix inequality (LMI).\fi

Consider now for all $i\in\mathcal{N}$, the functions $\phi_i:\R_{\geq0}\to\R_{\geq0}$, which evolve according to
\begin{equation}\label{eq:phidot}
    \frac{d}{d\tau_i}\phi_i\in(\omega_i(\tau_i)-1)\left(2L_i\phi_i+\gamma_i(\phi^2_i+1)\right),
\end{equation}
where the initial conditions $\phi_i(0)$ are to be specified and where $\omega_i:\R_{\geq0}\rightrightarrows[0,1]$ is defined as
\begin{equation}\label{eq:omega}
    \omega_i(\tau_i):=\begin{cases}
        \{0\},&\text{if}\;\tau_i\in[0,\tau_\text{MIET}^i),\\
        [0,1],&\text{if}\;\tau_i=\tau_\text{MIET}^i,\\
        \{1\},&\text{if}\;\tau_i>\tau_\text{MIET}^i.
    \end{cases}
\end{equation}
The constant $\tau_\text{MIET}^i>0$ is determined by the solution of the differential equation $\dot{\bar\phi}_i=-2L_i\bar{\phi}_i-\gamma_i(\bar{\phi}_i^2+1)$ with $\bar\phi_i(0)=\lambda_i^{-1}$, where $\lambda_i\in(0,1)$ is a tuning parameter. Indeed, $\tau^i_\text{MIET}$ is obtained as the value for which $\bar\phi_i(\tau^i_\text{MIET})=\lambda_i$, i.e.,
\begin{equation}\label{eq:taumiet}
    \tau^i_\text{MIET}:=\begin{cases}
        \frac{1}{L_ir_i}\arctan\left(\frac{r_i(1-\lambda_i)}{2\frac{\lambda_i}{1+\lambda_i}\left(\frac{\gamma_i}{L_i}-1\right)+1+\lambda_i}\right), & \gamma_i>L_i, \\
        \frac{1}{L_i}\frac{1-\lambda_i}{1+\lambda_i}, & \gamma_i=L_i, \\
        \frac{1}{L_ir_i}\operatorname{arctanh}\left(\frac{r_i(1-\lambda_i)}{2\frac{\lambda_i}{1+\lambda_i}\left(\frac{\gamma_i}{L_i}-1\right)+1+\lambda_i}\right), &   \gamma_i<L_i,
    \end{cases}
\end{equation}
where $r_i:=\sqrt{\left|\big(\tfrac{\gamma_i}{L_i}\big)^2-1\right|}$.

\iffull
\begin{proposition}\label{prop:phi}
    For all $i\in\mathcal{N}$, $\phi_i(\tau_i)\in[\lambda_i,\lambda_i^{-1}]$ for all $\tau_i\in\R_{\geq0}$. Moreover, $\phi_i(\tau_i)=\lambda_i$ for all $\tau_i\geq\tau_\text{\normalfont MIET}^i$.
\end{proposition}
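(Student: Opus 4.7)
The plan is to exploit the piecewise structure of $\omega_i$ in \eqref{eq:omega} and analyze the inclusion \eqref{eq:phidot} regime-by-regime, taking the natural initialization $\phi_i(0) = \lambda_i^{-1}$ (the same starting value used to define $\tau_\text{MIET}^i$ through $\bar\phi_i$). The argument then splits cleanly into the three cases $\tau_i \in [0,\tau_\text{MIET}^i)$, $\tau_i = \tau_\text{MIET}^i$, and $\tau_i > \tau_\text{MIET}^i$, mirroring the three-branch definition of $\omega_i$.

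First I would address the interval $[0,\tau_\text{MIET}^i)$, on which $\omega_i(\tau_i)\equiv 0$, so that \eqref{eq:phidot} collapses to the scalar Riccati-type ODE $\dot\phi_i = -2L_i\phi_i - \gamma_i(\phi_i^2 + 1)$. This is precisely the ODE defining $\bar\phi_i$ with $\bar\phi_i(0) = \lambda_i^{-1}$. The right-hand side is locally Lipschitz in $\phi_i$, so solutions are unique and $\phi_i\equiv\bar\phi_i$ on this interval; moreover, the right-hand side is strictly negative whenever $\phi_i > 0$, so $\bar\phi_i$ is strictly monotone decreasing. A separation-of-variables integration then yields the closed-form time at which $\bar\phi_i$ first attains $\lambda_i$, and the three branches in \eqref{eq:taumiet} arise naturally from the sign of the discriminant $\gamma_i^2 - L_i^2$ via the completion of the square that produces $(\phi_i + L_i/\gamma_i)^2 + (1 - (L_i/\gamma_i)^2)$ in the denominator. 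Combined with continuity at the endpoint, this gives $\phi_i(\tau_i)\in[\lambda_i,\lambda_i^{-1}]$ on $[0,\tau_\text{MIET}^i]$ and $\phi_i(\tau_\text{MIET}^i) = \lambda_i$.

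Next I would treat $\tau_i > \tau_\text{MIET}^i$: there $\omega_i(\tau_i)\equiv 1$, hence $\omega_i - 1 = 0$ and \eqref{eq:phidot} forces $\dot\phi_i = 0$. Together with the value $\phi_i(\tau_\text{MIET}^i) = \lambda_i$ established in the previous step, this pins $\phi_i(\tau_i) = \lambda_i$ for all $\tau_i\geq\tau_\text{MIET}^i$. The isolated instant $\tau_i = \tau_\text{MIET}^i$, at which $\omega_i$ is set-valued with image $[0,1]$, has Lebesgue measure zero and is therefore harmless for the absolutely continuous trajectory; any selection at that point is consistent with both one-sided limits, so no genuine choice has to be made.

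I do not foresee a serious obstacle. The only piece that is more than bookkeeping is the explicit integration leading to \eqref{eq:taumiet}, which must correctly handle the sign of $\gamma_i^2 - L_i^2$ through the auxiliary parameter $r_i := \sqrt{|(\gamma_i/L_i)^2 - 1|}$; I would relegate this to a brief aside rather than interrupting the main flow of the argument. The overall skeleton is simply a phase-line analysis of a scalar, strictly decreasing Riccati ODE followed by a trivial stationary regime, which together deliver both the uniform envelope $\phi_i(\tau_i)\in[\lambda_i,\lambda_i^{-1}]$ and the claimed stabilization $\phi_i(\tau_i) = \lambda_i$ for $\tau_i\geq\tau_\text{MIET}^i$.
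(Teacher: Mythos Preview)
Your argument is correct. The paper's sketch takes a slightly different route: rather than analyzing the inclusion \eqref{eq:phidot} directly, it first considers the discontinuous single-valued selection $\bar\omega_i$ (equal to $0$ on $[0,\tau_\text{MIET}^i]$ and $1$ afterwards), invokes \cite{Carnevale_Teel_Nesic_2007} for the envelope $\phi_i\in[\lambda_i,\lambda_i^{-1}]$ in that discontinuous-ODE setting, and then appeals to \cite[Theorem~4.3]{hybridsystems} to argue that the Krasovskii regularization---i.e., passing to the set-valued $\omega_i$ of \eqref{eq:omega}---produces the same solution set. Your regime-by-regime analysis bypasses both external references and is fully self-contained, at the cost of spelling out the phase-line reasoning; the paper's version is terser but leans on machinery the reader must look up. One minor remark: the explicit separation-of-variables integration yielding \eqref{eq:taumiet} is not actually needed for the proposition, since $\tau_\text{MIET}^i$ is \emph{defined} as the first time $\bar\phi_i$ hits $\lambda_i$; for the envelope claim it suffices that such a finite time exists, which already follows from $\dot{\bar\phi}_i\leq -\gamma_i<0$.
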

\begin{proof}[Sketch of proof]
    Consider the discontinuous function
    $\bar\omega_i(\tau_i)=0$ if $\tau_i\in[0,\tau_\text{MIET}^i]$ and $\bar\omega_i(\tau_i)=1$ if $\tau_i>\tau_\text{MIET}^i$. From \cite{Carnevale_Teel_Nesic_2007}, we know that for this (discontinuous) differential equation, $\phi_i\in[\lambda_i,\lambda^{-1}_i]$. From \cite[Theorem 4.3]{hybridsystems}, we know that taking the solution to the differential inclusion where we take the convex closure of $\bar\omega_i$ produces the same solution.
\end{proof}
\fi
The function $\omega_i$, $i\in\mathcal{N}$, in \eqref{eq:omega} is defined such that the flow map $F$ is outer semi-continuous to ensure that the hybrid system $\mathcal{H}$ satisfies the hybrid basic conditions as presented in \cite[Assumption 6.5]{hybridsystems}.

We are now ready to state the main result of this paper\iffull, whose sketch of proof is given in the appendix\fi.
\begin{theorem}
    Consider the hybrid system $\mathcal{H}$ and suppose Assumptions 1-4 hold. We define for all $i\in\cal{N}$, $\xi\in\X$ and $\nu\in\V$, $\Psi_i$ as
    \begin{equation}
        \Psi_i(o_i):=\varrho_i(|q_i(z,e,w)|)-\omega_i(\tau_i)\overline{\gamma}_i\beta_iW_i^2(\widetilde{\epsilon}_i)-\sigma_i(\eta_i)+s_i\label{eq:trigger},
    \end{equation}
    where $\overline{\gamma}_i=2\gamma_i\lambda_iL_i+\gamma_i^2(1+\lambda_i^2)$, $\beta_i=2\frac{\overline{\beta}^2_i}{\underline{\beta}^2_i}$, $\sigma_i$ is any $\mathcal{K}_\infty$-function, $s_i\geq0$ is a tuning parameter and $\varrho_i,q_i,L_i,\gamma_i,\lambda_i$ and $W_i$ come from Assumptions \ref{ass:W} and \ref{ass:V}. Additionally, we design the trigger resets $\eta_i^0$ as
    \begin{equation}\label{eq:reset}
        \eta_i^0:=0.
    \end{equation}
    The system $\mathcal{H}$ with trigger dynamics \eqref{eq:trigger} and reset \eqref{eq:reset} is persistently flowing and the set $\cal A:=\{\xi: e=0\land\epsilon=0\land\eta=0\}$ is ISS if $s_i=0$ for all $i\in\mathcal{N}$. It is ISpS if, for any $i\in\mathcal{N}$, $s_i>0$.\label{thm:trigger}
\end{theorem}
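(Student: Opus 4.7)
The plan is to apply the Lyapunov criterion of Proposition 2 to $\mathcal{H}$ with the candidate
\[ U(\xi) := V(e) + \sum_{i \in \mathcal{N}} \gamma_i \phi_i(\tau_i) W_i^2(\epsilon_i) + \sum_{i \in \mathcal{N}} \eta_i, \]
where $\phi_i$ is the auxiliary scalar from \eqref{eq:phidot} initialized at $\phi_i(0) = \lambda_i^{-1}$. By Proposition 3 (giving $\phi_i \in [\lambda_i, \lambda_i^{-1}]$), together with \eqref{eq:Vsandwich}, \eqref{eq:Wsandwich}, and the non-negativity of $\eta_i$ verified below, $U$ admits $\mathcal{K}_\infty$ sandwich bounds in $|\xi|_\cal{A}$.

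For the flow analysis, using \eqref{eq:phidot}, \eqref{eq:Wder}, and Young's inequality $2\gamma_i\phi_i W_i H_i \leq \gamma_i^2 \phi_i^2 W_i^2 + H_i^2$, I will establish the uniform bound
\[ \tfrac{d}{dt}\bigl[\gamma_i\phi_i W_i^2(\epsilon_i)\bigr] \leq -\gamma_i^2 W_i^2(\epsilon_i) + H_i^2 + \omega_i(\tau_i)\,\overline\gamma_i W_i^2(\epsilon_i), \]
by direct computation in the two regimes $\omega_i = 0$ (where the negative $\dot\phi_i$ contribution exactly cancels the cross terms, yielding $-\gamma_i^2 W_i^2 + H_i^2$) and $\omega_i = 1$ with $\phi_i \equiv \lambda_i$ (where $2\gamma_i\lambda_i W_i(L_i W_i + H_i) \leq (2\gamma_i\lambda_i L_i + \gamma_i^2 \lambda_i^2) W_i^2 + H_i^2 = (\overline\gamma_i - \gamma_i^2) W_i^2 + H_i^2$). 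Summing over $i$ and adding \eqref{eq:Vder} together with $\dot\eta_i = \Psi_i(o_i)$, the $H_i^2$, $\gamma_i^2 W_i^2(\epsilon_i)$ and $\varrho_i(|q_i|)$ terms telescope. The surplus $\sum_i \omega_i \overline\gamma_i W_i^2(\epsilon_i)$ is then absorbed by the trigger contribution $-\sum_i \omega_i \overline\gamma_i \beta_i W_i^2(\tilde\epsilon_i)$: from $\epsilon_i = \tilde\epsilon_i - (\what_i - w_i)$ (see \eqref{eq:measnetworkerror}), the triangle inequality and \eqref{eq:Wsandwich} yield $W_i^2(\epsilon_i) \leq \beta_i W_i^2(\tilde\epsilon_i) + 2\overline\beta_i^2 |\what_i - w_i|^2$, so the only residual is a noise-dependent term which is uniformly bounded by $\|w\|_\infty^2$ and enters as an enlarged $\mathcal{K}_\infty$-type gain. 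The final flow bound takes the form
\[ \dot U \leq -\alpha(|e|) - \sum_{i \in \mathcal{N}} \mu_i W_i^2(\epsilon_i) - \sum_{i \in \mathcal{N}} \sigma_i(\eta_i) + \widetilde\zeta(\|\nu\|_\infty) + \sum_{i \in \mathcal{N}} s_i, \]
fitting Proposition 2 with $c = \sum_i s_i$.

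At a jump of sensor $i$, the only components of $\xi$ that change are $\epsilon_i \mapsto 0$, $\tau_i \mapsto 0$ and $\eta_i \mapsto 0$, so $U(\xi^+) - U(\xi) = -\gamma_i \phi_i(\tau_i) W_i^2(\epsilon_i) - \eta_i \leq 0$ provided $\eta_i \geq 0$ at the jump. Non-negativity of $\eta_i$ is preserved throughout: whenever $\eta_i = 0$ and $\tau_i \in [0,\tau_\text{MIET}^i]$, the fact that $\omega_i(\tau_i) = 0$ and $\sigma_i(0) = 0$ yields $\dot\eta_i = \varrho_i(|q_i|) + s_i \geq 0$, and for $\tau_i > \tau_\text{MIET}^i$ a jump is enabled as soon as $\eta_i$ hits $0$. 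Persistent flowing then follows from the lower bound $\underline\tau_i > 0$ between consecutive jumps of sensor $i$ (ruling out Zeno), Assumption 2 (no finite escape time for $z$), and the Lyapunov bound (no finite escape time for $e$ and $\epsilon$). Proposition 2 finally yields ISS when every $s_i = 0$ and ISpS otherwise.

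The main obstacle is handling the measurement noise consistently with the pointwise Lyapunov formulation: the residual $|\what_i - w_i|^2$ couples a state variable (the held sample $\what_i$) with the instantaneous input ($w_i$), and turning this into a legitimate $\mathcal{K}_\infty(\|\nu\|_\infty)$-type gain in the IS(p)S estimate requires the observation that $\what_i$ is always a past value of $w_i$ and is therefore uniformly bounded by $\|w\|_\infty$. The rest is patient bookkeeping to ensure that the $\overline\gamma_i$ cancellation between the $\frac{d}{dt}[\gamma_i\phi_i W_i^2]$ term and the trigger \eqref{eq:trigger} is exact in both the $\omega_i = 0$ and $\omega_i = 1$ regimes.
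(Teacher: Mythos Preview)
Your proposal is correct and follows essentially the same route as the paper's own sketch of proof: the same Lyapunov function $U(\xi)=V(e)+\sum_i\gamma_i\phi_i(\tau_i)W_i^2(\epsilon_i)+\sum_i\eta_i$, the same two-regime treatment of $\frac{d}{dt}[\gamma_i\phi_i W_i^2]$ via Young's inequality and Proposition~3, the same absorption of $\omega_i\overline\gamma_iW_i^2(\epsilon_i)$ by the trigger through the bound $W_i^2(\epsilon_i)\leq\beta_iW_i^2(\tilde\epsilon_i)+\text{noise}$, and the same handling of $\what$ via $\|\what\|_\infty\leq\|w\|_\infty$. Your exposition is in fact more explicit than the paper's sketch on the $\omega_i=0$ versus $\omega_i=1$ bookkeeping and on why $\eta_i\geq0$ is preserved, which are precisely the points the paper leaves to the reader.
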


If an upper-bound $\overline{w}_i$ on the $\Linf$-norm of the measurement noise $w_i$ is explicitly known for node $i\in\mathcal{N}$, as defined in the following assumption, we may improve the average inter-event times, as will be explained in the sequel.
\begin{assumption}\label{ass:noise}
    For each $i\in\cal{N}$, $w_i\in\Linf$ with $w_i(t,j)$ taking values in $\cal{W}_i$ for all $(t,j)\in\dom w_i$, where $\cal W_i:=\left\{w_i\in\R^{m_i}\;\big|\;|w_i|\leq\overline w_i\right\}$ for some $\overline w_i\in\R_{\geq0}$.
\end{assumption}
\begin{corollary}\label{crl:trigger}
    Consider the hybrid system $\mathcal{H}$ and suppose Assumptions 1-5 hold. We design $\Psi_i$ according to \eqref{eq:trigger}. Additionally, we design the trigger resets $\eta_i^0$ as
    \begin{equation}\label{eq:resetnon0}
        \eta_i^0(o_i):=\gamma_i\lambda_i\left(\underline{\beta}_i\max\left[|\widetilde{\epsilon}_i|-2\overline{w},0\right]\right)^2.
    \end{equation}
    The system $\mathcal{H}$ with trigger dynamics \eqref{eq:trigger} and reset \eqref{eq:resetnon0} is persistently flowing and the set $\cal A:=\{\xi: e=0\land\epsilon=0\land\eta=0\}$ is ISS if $s_i=0$ for all $i\in\mathcal{N}$. It is ISpS if, for any $i\in\mathcal{N}$, $s_i>0$.
\end{corollary}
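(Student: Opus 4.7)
The plan is to adapt the proof of Theorem \ref{thm:trigger} by localising the modification to the jump analysis, since the only change introduced by Corollary \ref{crl:trigger} is the reset map $\eta_i^0$ in \eqref{eq:resetnon0}. The flow map $F$, the jump components affecting $(z,e,\epsilon,\what,\tau)$, and the MIET-based timing constraint on $\tau_i$ are all untouched, so the persistently flowing property carries over verbatim. I would reuse the Lyapunov candidate
\[U(\xi) := V(e) + \sum_{i\in\mathcal{N}} \gamma_i\phi_i(\tau_i) W_i^2(\epsilon_i) + \sum_{i\in\mathcal{N}} \eta_i\]
employed in Theorem \ref{thm:trigger}, whose flow estimate $\langle\nabla U,f\rangle \leq -\alpha(|\xi|_\mathcal{A}) + \gamma(|\nu|) + c$ (with $c=0$ if every $s_i=0$, and $c>0$ otherwise) is unaffected because the flow dynamics of every state component are identical.

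What remains is to verify item 3) of Proposition \ref{prop:lyapunovisps} for the new reset. At a jump of node $i$, only the $i$-th summands of $U$ change; using $W_i(0)=0$ together with Proposition \ref{prop:phi} (which gives $\phi_i(\tau_i)=\lambda_i$ at a jump since $\tau_i\geq\tau_\text{MIET}^i$ is enforced), one obtains
\[U(g)-U(\xi) = -\gamma_i\lambda_i W_i^2(\epsilon_i) + \eta_i^0(o_i) - \eta_i.\]
The crux is bounding $\eta_i^0(o_i)$. Using $\widetilde{\epsilon}_i = \epsilon_i + \what_i - w_i$ together with Assumption \ref{ass:noise}, which controls both $|w_i|$ and the sampled $|\what_i|$ by $\overline{w}_i$, the triangle inequality yields $|\widetilde{\epsilon}_i| \leq |\epsilon_i| + 2\overline{w}_i$, hence $\max[|\widetilde{\epsilon}_i|-2\overline{w}_i,\,0] \leq |\epsilon_i|$. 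Combining this with $\underline{\beta}_i|\epsilon_i| \leq W_i(\epsilon_i)$ from Assumption \ref{ass:W} gives $\eta_i^0(o_i) \leq \gamma_i\lambda_i W_i^2(\epsilon_i)$.

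Since $\dot{\eta}_i \geq 0$ whenever $\eta_i=0$ and $\tau_i\in[0,\tau_\text{MIET}^i]$ (already established for Theorem \ref{thm:trigger}) and since $\eta_i^0(o_i)\geq 0$, an induction over jumps maintains $\eta_i\geq 0$ throughout the evolution. Substituting the bound on $\eta_i^0(o_i)$ therefore yields $U(g)-U(\xi) \leq -\eta_i \leq 0$, so item 3) of Proposition \ref{prop:lyapunovisps} is satisfied. Invoking Proposition \ref{prop:lyapunovisps} then concludes ISS of $\mathcal{A}$ when $s_i=0$ for all $i\in\mathcal{N}$, and ISpS otherwise.

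The main conceptual obstacle, and the motivation for the specific form of \eqref{eq:resetnon0}, is recognising that the reset $\epsilon_i\to 0$ releases exactly $\gamma_i\lambda_i W_i^2(\epsilon_i)$ of Lyapunov mass at a jump, and that Assumption \ref{ass:noise} allows one to estimate the unobservable $|\epsilon_i|$ from the observable $|\widetilde{\epsilon}_i|$ up to an additive $2\overline{w}_i$. The reset \eqref{eq:resetnon0} is then calibrated to inject precisely the portion of this budget attributable to $\epsilon_i$ rather than to noise, without exceeding it; the resulting larger post-jump $\eta_i$ generically delays the next triggering, thereby improving average inter-event times in the presence of non-vanishing measurement noise.
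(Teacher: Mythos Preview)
Your proposal is correct and follows essentially the same approach as the paper: both arguments reduce to the single bound $\eta_i^0(o_i)\leq\gamma_i\lambda_i W_i^2(\epsilon_i)$, obtained via $|\widetilde{\epsilon}_i|\leq|\epsilon_i|+2\overline{w}_i$ from Assumption~\ref{ass:noise} and the sandwich bound \eqref{eq:Wsandwich}, after which the proof of Theorem~\ref{thm:trigger} applies unchanged. One minor inaccuracy: with the general jump set \eqref{eq:ttjumpset} a jump may occur with $\tau_i\in[\underline{\tau}_i,\tau_\text{MIET}^i)$, so $\phi_i(\tau_i)=\lambda_i$ is not guaranteed --- but Proposition~\ref{prop:phi} still gives $\phi_i(\tau_i)\geq\lambda_i$, which is all your jump estimate needs.
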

\iffull
\begin{proof}
    Observe that $\eta_i^0$ can be upper-bounded by
    $\eta_i^0\leq\gamma_i\lambda_i\left(\underline{\beta}_i\max\left[|\widetilde{\epsilon}_i|-2\overline{w},0\right]\right)^2\leq\gamma_i\lambda_i\left(\underline{\beta}_i|\epsilon_i|\right)^2\leq\gamma_i\lambda_iW_i^2(\epsilon_i)$, where the penultimate inequality is obtained using \eqref{eq:measnetworkerror} and Assumption \ref{ass:noise}, and the final inequality by \eqref{eq:Wsandwich}. The proof then follows directly from the proof of Theorem \ref{thm:trigger}.
\end{proof}
\fi
By resetting $\eta_i$ to any positive value, it takes longer for $\eta_i$ to violate the flow condition in $\mathcal{C}_i$, i.e., to get $\eta_i=0$. Consequently, the time between consecutive transmissions may be significantly larger, especially considering that $\eta_i^0$ scales with $\widetilde{\epsilon}_i$. Thus, if the network-induced error grows rapidly when $\tau_i\in[0,\tau_\text{MIET}^i]$, fewer transmissions are required.
\begin{remark}
    As also noted in \cite{Scheres_Postoyan_Heemels_2020}, selecting the constants $s_i$ such that $s_i>0$ will lead to practical stability. However, if the noise is non-vanishing, when the estimation error is close to $0$, the inter-event times are generally close to $\tau_\text{MIET}^i$, due to the trigger resets being $0$ and $\varrho_i$ (the difference between the estimated state and the measured state), which are usually small. As demonstrated in the numerical example \iffull below \else (see \cite{Scheres2021})\fi, choosing $s_i>0$ may increase the average inter-event times significantly when the estimation error is in the neighborhood of the origin, while not significantly impacting the asymptotic recovery of the estimated state.
\end{remark}
\iffull
\section{Numerical case study}\label{sec:numericalexample}
Consider the LTI system
\begin{equation}\label{eq:linsys}
    \Sigma:\dot{x}=Ax,\;y_i=C_ix,\;\widetilde{y}_i=y_i+w_i,\quad i\in\{1,2\},
\end{equation}
with two outputs $y_1,y_2$ and matrices
\begin{equation}
    \begin{gathered}
    A=\text{\footnotesize$\begin{bmatrix}
        0 & 2 & 0 & 0 & 0 & 1\\-2 & 0 & 1 & 0 & 0 & 0\\0 & -1 & 0 & 2 & 0 & 0\\0 & 0 & -2 & 0 & 1 & 0\\0 & 0 & 0 & -1 & 0 & 2\\-1 & 0 & 0 & 0 & -2 & 0
    \end{bmatrix}$},\\
    C_1=(1, \mathbf{0}_5),\;  C_2=(\mathbf{0}_2, 1, \mathbf{0}_3),\; C=(C_1,C_2).
    \end{gathered}
\end{equation}
This system consists of 3 interconnected marginally stable oscillators, hence the origin of \eqref{eq:linsys} is not exponentially stable (but its trajectories remain bounded). The full state of this system cannot be reconstructed from $C_1$ or $C_2$ separately, but the pair $(A,(C_1,C_2))$ is observable. We design a Luenberger-type observer, i.e.,
\begin{equation}\label{eq:linobs}
    \Omega:\dot{\chi}=A\chi+L(\psi-\yest),\;\psi_i=C_i\chi,\quad i\in\{1,2\},
\end{equation}
where $\psi:=(\psi_1,\psi_2)$. Using \eqref{eq:measnetworkerror}, the observer dynamics can be rewritten as $\dot{\chi}=A\chi+L(\psi-y-\epsilon-\what)$, resulting in the estimation and network-induced error dynamics as
\begin{equation}
    \begin{aligned}
        \dot{e}&=(A+LC)e-L\epsilon-L\widehat{w},\\
        \dot{\epsilon}_i&=C_iA\chi-C_iAx=C_iAe.
    \end{aligned}
\end{equation}
We take $W_i(\epsilon_i)=|\epsilon_i|$, resulting in $\underline{\beta}_i=\overline{\beta}_i=1$, $L_i=0$ and $H_i(e):=|C_iAe|$. We satisfy Assumption \ref{ass:V} by choosing $V(e)=e^\top Pe$, $\varrho_i(s)=s^\top Q_i s$ with $Q_i\succ0$ and solving the LMI
\begin{equation}\label{eq:LMI}
    \begin{gathered}
        \text{\footnotesize $
    \begin{bmatrix}
        \Lambda & -PL & -PL & -C^\top Q\\
        -L^\top P & \Gamma & 0 & 0\\
        -L^\top P & 0 & -\theta I & 0\\
        -QC & 0 & 0 & Q-\theta I
    \end{bmatrix}\preceq0$,}\\
    \end{gathered}
\end{equation}
where {\footnotesize $\Lambda := (A+LC)^\top P+P(A+LC) + \rho_VP + A^\top C^\top CA + C^\top QC$} and {\footnotesize $Q:=\diag(Q_1,Q_2),\;\Gamma := \diag\left(\mu_1-\gamma^2_1,\mu_2-\gamma_2^2\right)$}. One such solution is given by
\[ \text{\footnotesize $L = \begin{bmatrix}-51 & -92 & 41 & 76 & 205 & -78\\ 41 & 86 & -51 & -88 & -205 & 72\end{bmatrix}^\top,$} \]
$\mu_1=\mu_2=0.5$, $\rho_V=2$, $Q_1=Q_2=2$, $\gamma_1=\gamma_2=6.1623$ and $\theta=2.39\cdot10^{4}$. We pick $\lambda=0.7$, which fixes the constants $\tau_\text{MIET}^1=\tau_\text{MIET}^2=0.0566$.
Consequently, we design the trigger dynamics and resets using Corollary \ref{crl:trigger} as
\begin{equation}
    \begin{aligned}
        \Psi_i=&(\psi_i-\widetilde{y}_i)^\top Q_i(\psi_i-\widetilde{y}_i)-2\omega_i(\tau_i)\overline{\gamma}_i|\widetilde{\epsilon}_i|^2-\sigma_i\eta_i+s_i,\\
        \eta_i^0=&\gamma_i\lambda_i\left(\max(|\widetilde{\epsilon}_i|-2\overline{w},0)\right)^2,
    \end{aligned}
\end{equation}
with $\sigma_1=\sigma_2=0.05$. We show the influence of different choices for $s_i$ below. As noise, we use a randomly generated signal which takes values in $[-10^{-3},10^{-3}]$. A new value is randomly chosen every $10^{-4}$ seconds and held constant until the next value is chosen. Thus, the noise is a discontinuous signal.
\begin{figure}[ht!]
    \input{Figures/IET_s0.tex}
    \vspace{-2em}\caption{\footnotesize Inter-event times for event-triggered state estimation for $s_i=0$. The black dashed line represents $\tau_\text{MIET}$.}\label{fig:iets0}
    \vspace{0.25em}
\begin{tikzpicture}
\begin{axis}[%
width=8.75cm,
height=4cm,
at={(0,0)},
xmin=0,
xmax=16,
xlabel={Time [s]},
ymin=0,
ymax=3,
ylabel={Inter-event times},
axis background/.style={fill=white},
axis x line*=bottom,
axis y line*=left,
xmajorgrids,
ymajorgrids,
legend pos = north west
]
\addplot [color=black, only marks, mark=x, mark options={solid, black}]
  table[row sep=crcr]{%
0.170402735198498	0.109119987249002\\
0.336315467069826	0.165912731871328\\
0.465436166199224	0.129120699129398\\
0.613925661795821	0.148489495596597\\
0.821985708087922	0.208060046292101\\
1.38982798140219	0.567842273314271\\
1.73787173923879	0.348043757836601\\
2.08639934686838	0.348527607629588\\
2.23964124654727	0.153241899678889\\
2.38196920452938	0.14232795798211\\
2.51291215140697	0.130942946877584\\
2.64279922128187	0.129887069874909\\
2.77676230615908	0.133963084877204\\
2.9192778578001	0.142515551641021\\
3.07407064136347	0.154792783563371\\
3.24351867529715	0.16944803393368\\
3.43464016709286	0.19112149179571\\
3.65551150073241	0.220871333639555\\
3.93646376111975	0.280952260387339\\
4.46018819258404	0.523724431464283\\
4.7132136003543	0.253025407770267\\
4.92475737731741	0.211543776963108\\
5.40455284432965	0.479795467012241\\
5.84451025792819	0.439957413598536\\
6.07215992026818	0.227649662339987\\
6.32455989009705	0.252399969828878\\
6.49223160757933	0.167671717482276\\
6.63235279044919	0.140121182869858\\
6.88751126307098	0.255158472621794\\
7.17049608767197	0.282984824600988\\
7.58472621198546	0.414230124313494\\
7.86029692394533	0.275570711959865\\
8.02585482847146	0.165557904526136\\
8.26626742247364	0.240412594002176\\
8.72931977550226	0.463052353028619\\
8.85460944833103	0.125289672828774\\
9.255591244832	0.400981796500966\\
9.42448197020519	0.168890725373187\\
9.66191163933324	0.237429669128055\\
10.5437911442611	0.881879504927827\\
10.858405405177	0.314614260915882\\
11.7115763763728	0.853170971195812\\
11.8398953303265	0.128318953953706\\
12.4922593075296	0.652363977203132\\
12.5844680921374	0.0922087846077577\\
13.2077975312681	0.623329439130693\\
13.3293066329769	0.121509101708842\\
13.7082695428767	0.378962909899837\\
13.8485508814095	0.140281338532736\\
13.9559557109685	0.107404829559067\\
14.7174363286637	0.761480617695211\\
15.1909196164442	0.473483287780502\\
15.3127580196807	0.12183840323647\\
15.4718565338577	0.15909851417697\\
16.1057313054032	0.633874771545539\\
16.38801648346	0.282285178056803\\
16.6576779764424	0.269661492982419\\
16.8314001315709	0.173722155128438\\
16.946264069889	0.114863938318152\\
17.2845758671852	0.338311797296189\\
17.5732210410767	0.288645173891481\\
17.6895415538252	0.116320512748505\\
18.1029174225336	0.41337586870841\\
18.2720180004609	0.169100577927267\\
18.4626384508702	0.190620450409334\\
18.6181094674425	0.15547101657226\\
18.9550551815879	0.336945714145386\\
19.0822733532035	0.127218171615677\\
19.7660748651109	0.683801511907401\\
19.9235910628809	0.15751619776994\\
};
\addlegendentry{\footnotesize Sensor 1}
\addplot [color=red, only marks, mark=o, mark options={solid, red}]
  table[row sep=crcr]{%
0.168647183426238	0.0798265166944764\\
0.257233945428924	0.0885867620026862\\
0.363013560891954	0.10577961546303\\
0.494411635786232	0.131398074894278\\
0.662337985253436	0.167926349467204\\
0.942879239271697	0.280541254018261\\
1.64585735959289	0.702978120321194\\
1.79664283985026	0.150785480257365\\
1.92167904006013	0.125036200209873\\
2.03306195932297	0.111382919262844\\
2.14618107407232	0.113119114749348\\
2.26234221405819	0.116161139985866\\
2.38434880848378	0.122006594425589\\
2.51259553288011	0.128246724396329\\
2.64668599244687	0.134090459566764\\
2.78750865227388	0.140822659827013\\
2.93644806678715	0.148939414513265\\
3.09497777485138	0.158529708064235\\
3.26474013933524	0.169762364483858\\
3.45120709601125	0.186466956676004\\
3.66103686753892	0.209829771527669\\
3.91688088420425	0.255844016665331\\
4.24310945161849	0.326228567414241\\
4.53670890557744	0.293599453958953\\
4.79665535401702	0.259946448439574\\
4.94851225138811	0.1518568973711\\
5.14475683036656	0.196244578978448\\
5.24886894987633	0.104112119509767\\
5.33138211981757	0.082513169941242\\
5.44174560969524	0.110363489877669\\
5.56321429505631	0.121468685361068\\
5.64195324038137	0.0787389453250604\\
5.7173417147773	0.0753884743959263\\
5.81021531793033	0.0928736031530324\\
5.92398273436446	0.113767416434128\\
6.01893260628733	0.0949498719228732\\
6.11494093560373	0.0960083293163958\\
6.2275766122932	0.112635676689476\\
6.3038719258212	0.076295313528\\
6.41697474780935	0.113102821988145\\
6.56206185612937	0.145087108320022\\
6.69422269411787	0.132160837988502\\
6.83437733572154	0.140154641603671\\
6.96216487216085	0.127787536439304\\
7.14678003411728	0.184615161956433\\
7.53760372092167	0.390823686804388\\
8.08174838878991	0.544144667868244\\
8.374008752821	0.292260364031092\\
9.52064346952716	1.14663471670616\\
10.3016069806883	0.78096351116117\\
10.9020394925902	0.600432511901888\\
11.4023255231211	0.500286030530887\\
13.8692536952317	2.46692817211064\\
14.4192360572594	0.549982362027626\\
15.4054370468558	0.986200989596439\\
16.8800268551041	1.47458980824826\\
17.1431463654901	0.263119510386016\\
};
\addlegendentry{\footnotesize Sensor 2}
\addplot [dashed, color=black]
  table[row sep=crcr]{%
0	0.056690961683258\\
20	0.056690961683258\\
};
\addlegendentry{\footnotesize $\tau_\text{MIET}$}
\end{axis}
\end{tikzpicture}%
    \vspace{-1em}\caption{\footnotesize Inter-event times for event-triggered state estimation for $s_i=2\cdot10^{-4}$. The black dashed line represents $\tau_\text{MIET}$.}\label{fig:ietsnon0}
    \vspace{0.25em}\input{Figures/NormDifferentS.tex}
    \vspace{-2em}\caption{\footnotesize Norm of estimation error. The red line depicts $|e|$ for $s_i=0$ and the black line for $s_i=2\cdot10^{-4}$.}
    \label{fig:enorm}
    \vspace{-2em}
\end{figure}
In Fig. \ref{fig:iets0}, the inter-event times are depicted when $s_i=0$. Due to the trigger resets being (close to) 0, and the function $\varrho_i$ being generally small if the estimation error is close to the origin, the inter-event times are generally close to $\tau_\text{MIET}^i$. When we add space-regularization, as in \cite{Scheres_Postoyan_Heemels_2020}, the inter-event times are significantly larger. From Fig. \ref{fig:enorm}, we can see that, by picking $s_i$ sufficiently small, the ultimate bound of $|e|$ remains close to the case where $s_i=0$.
\fi
\section{Conclusions}\label{sec:conclusion}
We presented a general framework for the design of an observer that receives noisy measurement data from multiple sensor nodes over a packet-based communication network. The framework applies to nonlinear systems with disturbances and measurement noise. By design, both time-triggered and event-triggered strategies can be used to determine the transmission instants. It is shown that, in the absence of noises, the observer asymptotically reconstructs the true system state. Moreover, if noises are present, the observer error satisfies an input-to-state stability property. If event-triggered strategies are used, it is shown that Zeno behavior does not occur. \iffull Using a numerical case study, we show that favorable inter-event times can be achieved in presence of measurement noise by applying space-regularization, in which case we obtain practical convergence. By properly tuning the space-regularization parameter, the asymptotic closeness of the estimation error to the origin is not significantly impacted.\else An example and more details are available in \cite{Scheres2021}.\fi

\bibliographystyle{IEEEtran}
\iffull
\bibliography{references}
\else
\def\baselinestretch{0.945}
\bibliography{references_short}
\fi
\iffull
\appendix
\section{Appendix}
\begin{proof}[Sketch of proof for Theorem \ref{thm:trigger}]
We consider the candidate Lyapunov function $U:\X\to\R_{\geq0}$ defined as
\begin{equation}
    U(\xi):=V(e)+\sum_{i\in\mathcal{N}}\gamma_i\phi_i(\tau_i)W^2_i(\epsilon_i)+\eta_i,
\end{equation}
for any $\xi\in\X$. From Proposition \ref{prop:phi}, $\phi_i(\tau_i)>\lambda_i$ for all $\tau_i\in\R_{\geq0}$. Additionally, $\eta_i\geq0$. Consequently, in view of \eqref{eq:Wsandwich}, \eqref{eq:Vsandwich} and the definition of attractor set $\cal A=\{\xi: e=0\land\epsilon=0\land\eta=0\}$ in Theorem \ref{thm:trigger}, there exist $\alpha_1,\alpha_2\in\mathcal{K}_\infty$ such that
\begin{equation}
    \alpha_1(|\xi|_\A)\leq U(\xi)\leq\alpha_2(|\xi|_\A)
\end{equation}
according to Proposition \ref{prop:phi}. Hence $U$ constitutes a valid Lyapunov candidate.

\paragraph{Flow analysis ($\xi\in\mathcal{C}$)}
By using Young's inequality ($xy\leq\frac{1}{2}x^2+\frac{1}{2}y^2$) and \eqref{eq:measnetworkerror}, we can upper-bound $W_i^2(\epsilon_i)$ as
$W_i^2(\epsilon_i)=W_i^2(\widetilde{\epsilon}_i+w_i-\what_i)\leq\overline{\beta}_i^2|\widetilde{\epsilon}_i+w_i-\what_i|^2\leq2\overline{\beta}_i^2\left(|\widetilde{\epsilon}_i|^2+|w_i-\what_i|^2\right)\leq\beta_iW_i^2(\widetilde{\epsilon}_i)+\delta(|w|)+\delta(|\what|)$,
where $\beta_i=2\frac{\overline{\beta}_i^2}{\underline{\beta}_i^2}$ and for some $\mathcal{K}_\infty$-function $\delta$. Then, similar to \cite{Dolk2017}, by using the facts that $2\gamma_i\phi_i(\tau_i)W_i(\epsilon_i)H_i\leq\gamma_i^2\phi_i^2(\tau_i)W_i^2(\epsilon_i) + H_i^2$ and that $\tau_i\geq\tau_\text{MIET}^i$ implies $\phi_i(\tau_i)=\lambda_i$ due to the definition of $\tau_\text{MIET}^i$ in \eqref{eq:taumiet} and Proposition \ref{prop:phi}, one can conclude that we obtain for all $\nu\in\V$ and $\tau_i\in\R_{\geq0}$, $i\in\mathcal{N}$ and almost all $\xi\in\mathcal{C}$ and any $f\in F(\xi,\nu)$,
\begin{equation}\label{eq:lyapbound}
    \begin{aligned}
        \langle\nabla U(\xi),&F(\xi,\nu)\rangle\leq-\alpha(|e|)+\overline{\zeta}(\nu)\\
        &+\textstyle{\sum}_{i\in\mathcal{N}}-\mu_iW_i^2(\epsilon_i)-\sigma_i(\eta_i)+s_i\\
        \leq&-d(U(\xi))+\overline{\zeta}(\|\nu\|)+\delta(|\what|)+s
    \end{aligned}
\end{equation}
with $d,\overline{\zeta}\in\mathcal{K}_\infty$ and $s=\sum_{i\in\mathcal{N}}s_i$.
\paragraph{Jump analysis} let $\xi\in\mathcal{D}$. We assume that a jump is generated by the $i$-th trigger.
Then, for any $g\in G_i(\xi)$,
\begin{equation}
    U(g)-U(\xi)=\eta^0_i-\eta_i-\gamma_i\phi_i(\tau_i)W_i^2(\epsilon_i)\leq0.
\end{equation}
\paragraph{Persistently flowing property}
Even though we have a hybrid system with inputs, since the state variables $\xi$ are absolutely continuous during flow and due to the fact that the flow and jumps sets $\mathcal{C}$ and $\mathcal{D}$ do not depend directly on the inputs, we can use the arguments as in \cite[Prop. 2.10]{hybridsystems} to prove nontrivial solutions exist and that all maximal solutions are complete. Note that (VC) of 2.10 holds for all $\xi\in\interior\mathcal{C}$ due to forward completeness. For all $\xi\not\in\interior(\mathcal{C}\setminus\mathcal{D})$, either $\tau_i=0$ or $\eta_i=0$ and $\tau_i\in[0,\underline{\tau}_i)$. For the former we note that $\dot\tau_i=1$ and for the latter, note that due to $\omega_i(\tau_i)$, $\Psi_i(o_i)\geq0$ for all $\tau_i\in[0,\tau_\text{MIET}^i)$. Due to $0<\underline{\tau}_i\leq\tau_\text{MIET}^i$, (VC) also holds in this case. Additionally, all maximal solutions are complete since no finite escape times are possible and due to $G(\mathcal{D})\subset\mathcal{C}\cup\mathcal{D}$. Lastly, we note that a jump (transmission) due to the $i$-th trigger can only occur after $\underline{\tau}_i>0$ time units. Take $\underline{\tau}:=\min_{i\in\mathcal{N}}\underline{\tau}_i$. Then, on any interval $[a,a+\underline{\tau})$ with $a\in\R_{\geq0}$, at most $N$ jumps (transmissions) can occur. Hence, the system is not Zeno and, therefore, persistently flowing.

Since $\mathcal{H}$ is persistently flowing, all conditions of Proposition \ref{prop:lyapunovisps} hold\footnote{Due to the definition of $\what$, see \eqref{eq:holdingfunction}, its $\Linf$-bound $\|\what\|_\infty$ can be bounded as $\|\what\|_\infty\leq\|w\|_\infty$. Hence, we can obtain the bounds of Definition \ref{def:isps} by similar arguments as \cite{Sontag1995}.}, and the set $\A$ is ISS with respect to $\nu=(v,w)$ if $s_i=0$ for all $i\in\mathcal{N}$, otherwise it is ISpS w.r.t. $\nu$.
\end{proof}
\fi
\end{document}